\renewcommand{\subsection}{\@startsection
{subsection}{2}{0mm}{\baselineskip}{-0.25cm}
{\normalfont\normalsize\em}}
\newtheorem{proposition}{Proposition}
\newtheorem{corollary}{Corollary}
\newtheorem{lemma}{Lemma}
{\theoremstyle{definition}

\newtheorem{example}{Example}}
\theoremstyle{remark}
\newtheorem{remark}{Remark}
\begin{document}

%% Heading %%%%%%%%%%%%%%%%%%%%%%%%%%%%%%%%%%%%%%%%%%%%%%%%%%%

\title[Computing sharp recovery structures for LRC codes]{Computing sharp recovery structures for Locally Recoverable codes}

\author{Irene M\'arquez-Corbella} 
\address{Department of Mathematics, Statistics and O. Research, University of La Laguna, 38200 La Laguna , Tenerife, Spain}
\email{imarquec@ull.es}

\author{Edgar Mart\'{\i}nez-Moro} 
\address{Institute of Mathematics,  University of Valladolid, Campus "Miguel Delibes", 47011, Valladolid, Castilla, Spain} 
\email{edgar.martinez@uva.es}

\author{Carlos Munuera} 
\address{Department of Applied Mathematics, University of Valladolid, Avda Salamanca SN, 47014 Valladolid, Castilla, Spain}
\email{cmunuera@arq.uva.es}

\begin{abstract} 
A locally recoverable code is an error-correcting code  such that any erasure in a single coordinate of a codeword can be recovered from a small subset of other coordinates.  In this article we develop an algorithm that computes a recovery structure as concise posible for an arbitrary linear code $\mathcal{C}$ and a recovery method that realizes it. This algorithm also provides the locality and the dual distance of $\mathcal{C}$. Complexity issues are studied as well. Several examples are included.
\end{abstract}

\keywords{error-correcting code, locally recoverable code, Test set}
\subjclass[2010]{94B27, 11G20, 11T71, 14G50, 94B05}
\maketitle

%%  Body of paper %%%%%%%%%%%%%%%%%%%%%%%%%%%%%%%%%%%%%%%%%%%%%%%%%

\section{Introduction}
\label{section1}

Locally recoverable  codes  were introduced in \cite{GHSY}, motivated by the  use of coding theory techniques applied to distributed and cloud storage systems, in which the information is  distributed over several nodes.
The growth of the amount of stored data  make the loss of information due to node failures a major problem.
To obtain a reliable storage,  when a node fails we want to be able to recover the data it contains by using information from the other nodes. This is called the {\em repair problem}. A naive method to solve it, is to replicate the same information in different nodes. A more clever strategy is to protect the data by using error-correcting codes, \cite{PD,RKSV}. 
As typical examples of this last solution, we can mention Google and Facebook, that use Reed-Solomon (RS) codes in their storage systems. The procedure is as follows: the information to be stored is a long sequence $b$ of symbols, which are elements of a finite field $\mathbb{F}_{\ell}$. This sequence is cut into blocks, $b=b_1,b_2,\dots$, of the same length, say $t$. According to the isomorphism $\mathbb{F}_{\ell}^t\cong \mathbb{F}_{\ell^t}$, each of these blocks can be seen as an element of the finite field $\mathbb{F}_q$, $q=\ell^t$. Fix an integer $k<q$. The vector $\mathbf{b}=(b_1,\dots,b_k)\in\mathbb{F}_q^k$ is encoded by using a RS code of dimension $k$ over $\mathbb{F}_q$, whose length $n$, $k<n<q$, is equal to the number of nodes that will be used in its storage.  Then we choose $\alpha_1,\dots,\alpha_n\in\mathbb{F}_q$,  and send $b_1+b_2\alpha_i+\dots+b_k\alpha_i^{k-1}$ to the $i$-th node. When a node fails, we may recover the data it stores 
by using Lagrangian interpolation from the information of any other $k$ available nodes.

The above solution to the repair problem is not optimal. When $k$ is small with respect to $n$, then the transmission rate $k/n$ obtained by our encoding method is poor. For large values of $k$ the scheme is wasteful since $k$ symbols must be used to repair just one. Thus it is natural to wonder if there exists other codes allowing the repair of lost encoded  data more efficiently than RS codes, that is by making use of smaller amount of information.

Roughly speaking we can set the repair problem in terms of coding theory as follows:
Let $\mathcal{C}$ be a linear code of length $n$, dimension $k$ and minimum distance $d$ over the field $\mathbb{F}_{q}$.
A coordinate $i\in\{ 1,\dots,n\}$ is {\em locally recoverable with locality $r$} if there is a {\em recovery set} $R\subseteq \{1,\dots,n\}$ with $i\not\in R$ and $\# R= r$, such that for any codeword $\mathbf{x}\in\mathcal{C}$, an erasure in position $i$ of $\mathbf{x}$  can be  recovered by using the information given by the coordinates of $\mathbf{x}$ with indices in $R$.  A collection of recovery sets for all coordinates is a {\em recovery structure}.
The code $\mathcal{C}$ is {\em locally recoverable (LRC) with locality} $\le r$  if  there exists a recovery structure of locality $\le r$, that is to say, if any coordinate is locally recoverable with locality at most $r$. The {\em locality} of $\mathcal{C}$, $\mbox{loc}(\mathcal{C})$ is the smallest $r$ verifying this condition. For example, it is not difficult to prove that MDS codes of dimension $k$ have locality $k$.

Every code $\mathcal{C}$ with minimum distance $d>1$ is locally recoverable with locality $\mbox{loc}(\mathcal{C})\le k$. In practice we are interested in LRC's   admitting recovery sets as small as possible, in relation to the other parameters  $[n, k, d]$ of $\mathcal{C}$. Thus the locality has become a fundamental parameter of a code when it is used for local recovery purposes. Unfortunately the explicit computation of recovery structures, and even the computation of the locality of a specific code, have been revealed as difficult problems. As regards the latter, there exist some known bounds on it. Perhaps the most interesting of them is the following Singleton-like bound:
the locality of $\mathcal{C}$ verifies the relation, \cite{GHSY},
\begin{equation} \label{LRCbound}
k+d+\left\lceil{\frac{k}{\mbox{loc}(\mathcal{C})}}\right\rceil\le  n+2
\end{equation}
which gives a lower bound on $\mbox{loc}(\mathcal{C})$. However it is known that this bound is not sharp, see Example \ref{ej1} below. Codes reaching equality in (\ref{LRCbound}) are called {\em optimal}.

Much research has been devoted in recent years to the repair problem and many recovery structures are known for different types of codes. See for instance \cite{BH,JMX,TB,TBGC} to be aware of the variety of methods used to that purpose. Nevertheless for most of the recovery structures currently available in the literature, it is unknown whether or not they can be refined to obtain other simpler ones. 

In this article we develop an algorithm that computes a recovery structure as concise as posible for an arbitrary code $\mathcal{C}$. This algorithm also gives the locality and the dual distance of $\mathcal{C}$. 
The article is organized as follows: in Section \ref{sectionRecoveryStructures} we summarize all necessary facts about LRC codes and recovery structures that we shall need in the rest. The algorithm is developped in Section \ref{Sharp recovery structure}, where complexity issues are treated as well. Finally, in Section \ref{experimental} we present some experimental results and running times for several examples of codes.

\section{Recovery Structures}
\label{sectionRecoveryStructures}

In this section we give some formal definitions and facts that will be used in the rest of this article. 

Let $\mathcal{C}$ be a linear code of length $n$ over $\mathbb{F}_{q}$. Let $\mathbf{G}$ be a generator matrix of $\mathcal{C}$ and  $\mathbf{c}_1,\dots,\mathbf{c}_n$ be its columns. A set $R\subset\{ 1,\dots,n\}$ with $i\notin R$, is a {\em recovery set} for coordinate $i$ if $\mathbf{c}_i$  is a linear combination of $\{\mathbf{c}_j : j\in R\}$, see \cite{GHSY}. Note that, this is equivalent to say that $\dim \mathcal{C}(R)=\dim \mathcal{C}(R\cup\{i\})$, where $\mathcal{C}(S)$ is the projection of $\mathcal{C}$ on the coordinates in $S$ (see Proposition \ref{propREC} below). Thus the notion of recovery set does not depend on the chosen generator matrix. As we shall prove later, if $R$ is a recovery set for $i$, then for every codeword $\mathbf{x}\in \mathcal{C}$, the coordinate $x_i$ can be obtained from the other coordinates $ x_j$ with indices  $j\in R$.

By an {\em elementary recovery structure} for $\mathcal{C}$ we mean a family $\mathcal{R}=(R_i)_{i=1,\dots,n}$,  such that for all $i=1,\dots,n$,  $R_i$ is a recovery set for the coordinate $i$. Thus a recovery structure allows us to recover an erasure at any position in a codeword.  The structure $\mathcal{R}$ is called {\em minimal} if so is each  $R_i$, that is, if no proper subset of $R_i$ is a recovery set for $i$.   A {\em general recovery structure} for $\mathcal{C}$ is the union of elementary structures, that is to say a collection of recovery sets for each coordinate.  From now on, all structures considered in this article will be elementary.

The code $\mathcal{C}$ is {\em locally recoverable (LRC)} if it admits a recovery structure  $\mathcal{R}=(R_i)_{i=1,\dots,n}$. In such case, the number $\mbox{loc}_i(\mathcal{R})  =\# R_i$  is  the {\em locality} of  $\mathcal{R}$ with respect to coordinate $i$, $1\le i\le n$.  The {\em locality} of  $\mathcal{R}$ is $\mbox{loc}(\mathcal{R})=\max\{ \#R_i : 1\le i\le n \}$. 
As different recovery structures are possible for the same code, it is natural to ask if given one of them, $\mathcal{R}$, there exits another, $\mathcal{R}'$, with smaller recovery sets. So we define
$\mbox{loc}_i(\mathcal{C})=\min\{ \mbox{loc}_i(\mathcal{R}) : \mathcal{R} \mbox{ is a recovery structure for $\mathcal{C}$}\}$, $1\le i\le n$, and  $\mbox{loc}(\mathcal{C})=\min\{ \mbox{loc} (\mathcal{R}) : \mathcal{R} \mbox{ is a recovery structure for $\mathcal{C}$}\}$. If $\mbox{loc}_i(\mathcal{R})=\mbox{loc}_i(\mathcal{C})$ for all $i=1,\dots,n$, then $\mathcal{R}$ will be called {\em sharp}. Clearly  $\mathcal{R}$ is sharp if and only if  $\sum_{i=1}^n \# R_i\le \sum_{i=1}^n \# R'_i$ for any structure $\mathcal{R'}=(R'_i)_{i=1,\dots,n}$ of $\mathcal{C}$, so the locality of a code is always reached through a sharp structure.

\begin{remark}
A structure $\mathcal{R}$ of $\mathcal{C}$ is {\em optimal} if $\mbox{loc} (\mathcal{R})$ reaches equality in the following bound
\begin{equation} \label{LRCboundR}
k+d+\left\lceil{\frac{k}{\mbox{loc}(\mathcal{R})}}\right\rceil\le  n+2
\end{equation} 
which is derived from (\ref{LRCbound}). Note that optimality is not enough to ensure sharpness. For instance, in \cite[Example 1]{TB}  the authors show a $[9,4,5]$ LRC code with a recovery structure $\mathcal{R}$ formed by sets of cardinality 2. Then $\mathcal{R}$ is optimal. Let $\mathcal{R}'$ be the structure whose sets are obtained by adding a (random) coordinate to the recovery sets of $\mathcal{R}$. Thus  $\mbox{loc} (\mathcal{R})=3$ and so $\mathcal{R}$ is optimal as well, but not sharp. 
This example also shows that the locality of a code can not, in general, be obtained from (\ref{LRCbound}), not even when an optimal structure is available.
\end{remark}

Let $d$ be the minimum distance of $\mathcal{C}$. If $d=1$ then, up to reordering, $\mathcal{C}$ contains the codeword $(1,0,\dots,0)$, so the first coordinate  can not have any recovery set and $\mathcal{C}$ is not a LRC. At the other end, if  there exists a coordinate $i$ such that $x_i=0$ for all codeword $\mathbf{x}\in \mathcal{C}$ (that is, if $\mathcal{C}$ is a degenerate code), it is not necessary to recover this coordinate from the others.  So in all that follows we will assume that $\mathcal{C}$ is a nondegenerate code of minimum distance $d>1$. 
Let us investigate in a little more in detail the recovering properties for these codes. As a notation,  $\mathcal{C}^{\perp}$ will be the dual of  $\mathcal{C}$. The {\em support} of a vector $\mathbf{x}\in \mathbb{F}_q^{n}$ is the set $\mbox{\rm supp}(\mathbf{x})=\{ i\in\{1,\dots,n\} : x_i\neq 0 \}$ and its weight is $\mbox{\rm wt}(\mathbf{x})=\#\mbox{\rm supp}(\mathbf{x})$.

\begin{proposition} \label{propREC}
Let $\mathcal{C}$ be a code of length $n$ and let $R\subseteq \{1,\dots,n\}$.  The following statements are equivalent.\newline
(i) $R$ is a recovery set for coordinate $i$; \newline
(ii) $\dim \mathcal{C}(R)=\dim \mathcal{C}(R\cup\{i\})$; \newline
(iii) there exists a codeword $\mathbf{w}\in \mathcal{C}^{\perp}$ such that $i\in \mbox{\rm supp}(\mathbf{w})\subseteq R\cup\{i\}$.
\end{proposition}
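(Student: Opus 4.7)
The plan is to establish a cycle of implications, or rather to show directly that (i), (ii) and (iii) are all reformulations of the single statement ``the column $\mathbf{c}_i$ lies in the span of $\{\mathbf{c}_j : j\in R\}$''. The whole proof is linear-algebraic book-keeping once one notes that for any $S\subseteq\{1,\dots,n\}$ the projected code $\mathcal{C}(S)$ is the row space of the submatrix $\mathbf{G}_S$ of $\mathbf{G}$ formed by the columns indexed by $S$, and hence $\dim\mathcal{C}(S)=\operatorname{rank}(\mathbf{G}_S)$.

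First I would deal with (i)$\Leftrightarrow$(ii). By the definition given just before the proposition, (i) says exactly that $\mathbf{c}_i\in\langle \mathbf{c}_j : j\in R\rangle$. This is equivalent to saying that adjoining the column $\mathbf{c}_i$ to the family $\{\mathbf{c}_j:j\in R\}$ does not increase the rank, i.e.\ $\operatorname{rank}(\mathbf{G}_{R\cup\{i\}})=\operatorname{rank}(\mathbf{G}_R)$, which by the observation above is (ii).

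Next I would prove (i)$\Leftrightarrow$(iii). If (i) holds, write $\mathbf{c}_i=\sum_{j\in R}\lambda_j\mathbf{c}_j$ and define $\mathbf{w}\in\mathbb{F}_q^n$ by $w_i=1$, $w_j=-\lambda_j$ for $j\in R$, and $w_l=0$ otherwise. Then $\sum_{l}w_l\mathbf{c}_l=\mathbf{0}$, so $\mathbf{G}\mathbf{w}^\top=\mathbf{0}$, i.e.\ $\mathbf{w}\in\mathcal{C}^\perp$; clearly $i\in\operatorname{supp}(\mathbf{w})\subseteq R\cup\{i\}$, which is (iii). Conversely, if $\mathbf{w}\in\mathcal{C}^\perp$ satisfies $i\in\operatorname{supp}(\mathbf{w})\subseteq R\cup\{i\}$, then $\sum_l w_l\mathbf{c}_l=\mathbf{0}$ and $w_i\neq 0$, so
\[
\mathbf{c}_i=-\frac{1}{w_i}\sum_{j\in R}w_j\mathbf{c}_j,
\]
which realizes $\mathbf{c}_i$ as a linear combination of $\{\mathbf{c}_j:j\in R\}$, giving (i).

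There is no real obstacle; the only subtlety worth flagging explicitly is the identification $\dim\mathcal{C}(S)=\operatorname{rank}(\mathbf{G}_S)$ that underlies (i)$\Leftrightarrow$(ii), and the fact that in the passage from (iii) back to (i) the condition $i\in\operatorname{supp}(\mathbf{w})$ (i.e.\ $w_i\neq 0$) is essential in order to solve for $\mathbf{c}_i$. One might also add, as the paper announces, the practical consequence: if $\mathbf{c}_i=\sum_{j\in R}\lambda_j\mathbf{c}_j$ then for every codeword $\mathbf{x}=\mathbf{b}\mathbf{G}\in\mathcal{C}$ one has $x_i=\mathbf{b}\mathbf{c}_i=\sum_{j\in R}\lambda_j x_j$, so the same scalars $\lambda_j$ recover the erased coordinate from the coordinates indexed by $R$.
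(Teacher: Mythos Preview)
Your proof is correct and follows essentially the same approach as the paper: both establish (i)$\Leftrightarrow$(ii) via the identification $\dim\mathcal{C}(S)=\operatorname{rank}\{\mathbf{c}_j:j\in S\}$ and (i)$\Leftrightarrow$(iii) via the equivalence $\mathbf{w}\in\mathcal{C}^\perp \iff \mathbf{G}\mathbf{w}^\top=\mathbf{0}\iff \sum_j w_j\mathbf{c}_j=\mathbf{0}$. You have simply spelled out the constructions and the role of $w_i\neq 0$ more explicitly than the paper's terse version.
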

\begin{proof}
The equivalence between (i) and (ii) follows from the fact that a generator matrix of  $\mathcal{C}(R)$ can be obtained from the submatrix of $\mathbf{G}$ given by the columns with indices in $R$, and so  $\dim \mathcal{C}(R)=\mbox{rank}  \{ \mathbf{c}_j : j\in R\}$.
Let $\mathbf{w}\in \mathbb{F}_q^{n}$. The equivalence between (i) and (iii) follows from the fact that  $\mathbf{w}\in \mathcal{C}^{\perp}$ if and only if  $\mathbf{G}\mathbf{w}^{T}=\mathbf{0}$, that is if and only if $\sum w_j\mathbf{c}_j=\mathbf{0}$.
\end{proof}

Given a word $\mathbf{w}\in \mathcal{C}^{\perp}$, for any codeword $\mathbf{x}\in \mathcal{C}$  we have that $\mathbf{w}\cdot\mathbf{x}=0$, where $\cdot$ stands for the usual inner product in $\mathbb{F}_q^{n}$, $\mathbf{w}\cdot\mathbf{x}=w_1x_1+\dots+w_nx_n$.  So, if $i\in\mbox{supp}(\mathbf{x})\subseteq \mbox{supp}(\mathbf{w})$, then $x_i$ can be obtained from $\{x_j: j\in \mbox{supp}(\mathbf{w})\setminus\{i\}\}$ as 
\begin{equation} \label{recuperacion}
x_i=-w_i^{-1} (\mathbf{w}\cdot\mathbf{x}).
\end{equation} 
Thus such $\mathbf{w}$ provides a recovery set and a recovery method for coordinate $i$.

\begin{corollary} \label{corREC1}
Let $\mathbf{w}\in \mathcal{C}^{\perp}$,   $\mathbf{w}\neq\mathbf{0}$. Then $\mbox{\rm supp}(\mathbf{w})\setminus\{i\}$ is a recovery set for all $i\in \mbox{\rm supp}(\mathbf{w})$. 
\end{corollary}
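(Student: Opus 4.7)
The plan is to reduce the claim directly to Proposition \ref{propREC}, specifically the equivalence (i) $\Leftrightarrow$ (iii). Given any nonzero $\mathbf{w}\in\mathcal{C}^{\perp}$ and an index $i\in\mbox{supp}(\mathbf{w})$, I set $R=\mbox{supp}(\mathbf{w})\setminus\{i\}$ and I need to exhibit a codeword in $\mathcal{C}^{\perp}$ whose support contains $i$ and is contained in $R\cup\{i\}$.

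The natural witness is $\mathbf{w}$ itself. Since $i\in\mbox{supp}(\mathbf{w})$, we have $R\cup\{i\}=(\mbox{supp}(\mathbf{w})\setminus\{i\})\cup\{i\}=\mbox{supp}(\mathbf{w})$, so the inclusion $i\in\mbox{supp}(\mathbf{w})\subseteq R\cup\{i\}$ is immediate. Proposition \ref{propREC}(iii) then yields that $R$ is a recovery set for coordinate $i$.

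As a complementary remark that makes the corollary explicit (and is anyway implicit in the discussion preceding the statement), I would point out that the recovery procedure is given by equation (\ref{recuperacion}): since $w_i\neq 0$, the identity $\mathbf{w}\cdot\mathbf{x}=0$ valid for every $\mathbf{x}\in\mathcal{C}$ can be solved for $x_i$, and the only coordinates of $\mathbf{x}$ that appear on the right-hand side are those indexed by $\mbox{supp}(\mathbf{w})\setminus\{i\}=R$. There is no real obstacle here; the statement is essentially an unpacking of Proposition \ref{propREC}, and the only point worth spelling out is the set-theoretic identity $R\cup\{i\}=\mbox{supp}(\mathbf{w})$ which uses the hypothesis $i\in\mbox{supp}(\mathbf{w})$.
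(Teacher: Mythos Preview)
Your proof is correct and matches the paper's approach: the corollary is stated without a separate proof there, being an immediate consequence of Proposition~\ref{propREC}(iii) (with $\mathbf{w}$ itself as witness) and of the explicit recovery formula~(\ref{recuperacion}) in the discussion just before the statement. Your write-up spells out exactly these two points.
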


A codeword $\mathbf{x}\in \mathcal{C}$ is called {\em minimal} if for every $\mathbf{y}\in \mathcal{C}$, $\mathbf{y}\neq \mathbf{0}$, such that  $\mbox{supp}(\mathbf{y})\subseteq \mbox{supp}(\mathbf{x})$ we have  $\mbox{supp}(\mathbf{y})= \mbox{supp}(\mathbf{x})$. 

% {\color{red} (Se usa esto en el resto? Si no, fuera!) 
% A property we will use later, is the following.
%
% \begin{lemma}
% Let $\mathcal{C}$ be a binary code, $\mathcal{C}\subseteq \mathbb{F}_{2}^n$, and let $\mathbf{x}\in\mathcal{C}$. If  $\mathbf{x}$ is not minimal then there exist two nozero codewords $\mathbf{x}_1,\mathbf{x}_2\in\mathcal{C}$ with disjoint support such that $\mathbf{x}=\mathbf{x}_1+\mathbf{x}_2$.
% \end{lemma}
% \begin{proof}
% If  $\mathbf{x}$ is not minimal then there exist a nonzero codeword $\mathbf{x}_1\in\mathcal{C}$ with $\mbox{supp}(\mathbf{x}_1)\subset \mbox{supp}(\mathbf{x})$. Then $\mathbf{x}_2=\mathbf{x}+\mathbf{x}_1$ has support disjoint with 
% $\mathbf{x}_1$.
% \end{proof}
% }

Given a coordinate $i\in \{1,\dots,n\}$, the codeword $\mathbf{x}$ is said to be {\em $i$-minimal} if $i\in \mbox{supp}(\mathbf{x})$ and  for every $\mathbf{y}\in \mathcal{C}$,  such that  $i\in \mbox{supp}(\mathbf{y})\subseteq \mbox{supp}(\mathbf{x})$ we have  $\mbox{supp}(\mathbf{y})= \mbox{supp}(\mathbf{x})$.

\begin{lemma} \label{iminimal}
Let $\mathbf{x}\in \mathcal{C}$. The following conditions are equivalent. \newline
(i) $\mathbf{x}$ is minimal; \newline
(ii) $\mathbf{x}$ is $i$-minimal for all $i\in\mbox{\rm supp}(\mathbf{x})$; \newline
(iii) $\mathbf{x}$ is $i$-minimal for some $i\in\mbox{\rm supp}(\mathbf{x})$.
\end{lemma}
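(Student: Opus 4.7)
My plan is to close the cycle (i) $\Rightarrow$ (ii) $\Rightarrow$ (iii) $\Rightarrow$ (i). The implication (i) $\Rightarrow$ (ii) is immediate from unpacking the definitions: if $\mathbf{x}$ is minimal and $i \in \mbox{supp}(\mathbf{x})$, then any $\mathbf{y} \in \mathcal{C}$ with $i \in \mbox{supp}(\mathbf{y}) \subseteq \mbox{supp}(\mathbf{x})$ is automatically nonzero, so minimality forces $\mbox{supp}(\mathbf{y}) = \mbox{supp}(\mathbf{x})$. For (ii) $\Rightarrow$ (iii), it suffices to pick any $i \in \mbox{supp}(\mathbf{x})$, which exists whenever $\mathbf{x}\neq\mathbf{0}$ (the statement being trivial for $\mathbf{x}=\mathbf{0}$).

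The substantive part is (iii) $\Rightarrow$ (i). Assume $\mathbf{x}$ is $i$-minimal for a fixed $i \in \mbox{supp}(\mathbf{x})$, and let $\mathbf{y} \in \mathcal{C}$ be nonzero with $\mbox{supp}(\mathbf{y}) \subseteq \mbox{supp}(\mathbf{x})$; I need $\mbox{supp}(\mathbf{y}) = \mbox{supp}(\mathbf{x})$. If $i \in \mbox{supp}(\mathbf{y})$, the conclusion is immediate from $i$-minimality. The only delicate case is $i \notin \mbox{supp}(\mathbf{y})$, which I will rule out by contradiction.

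Suppose then $y_i=0$. Pick any $j \in \mbox{supp}(\mathbf{y})$ (it exists since $\mathbf{y} \neq \mathbf{0}$) and form the codeword $\mathbf{z} = \mathbf{x} - (x_j / y_j)\,\mathbf{y} \in \mathcal{C}$. By construction $z_j = 0$, while $z_i = x_i - (x_j/y_j)\cdot 0 = x_i \neq 0$. Moreover $\mbox{supp}(\mathbf{z}) \subseteq \mbox{supp}(\mathbf{x}) \cup \mbox{supp}(\mathbf{y}) = \mbox{supp}(\mathbf{x})$. Thus $i \in \mbox{supp}(\mathbf{z}) \subseteq \mbox{supp}(\mathbf{x}) \setminus \{j\}$, a proper subset of $\mbox{supp}(\mathbf{x})$, contradicting the $i$-minimality of $\mathbf{x}$.

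The only real obstacle is spotting the correct linear combination: subtracting exactly $(x_j/y_j)\mathbf{y}$ kills the $j$-th coordinate while preserving the $i$-th precisely because $y_i=0$ in the awkward case, producing a witness that breaks $i$-minimality. Everything else is routine support bookkeeping.
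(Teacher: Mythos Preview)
Your proof is correct and follows essentially the same route as the paper: the easy implications (i)$\Rightarrow$(ii)$\Rightarrow$(iii) are dismissed quickly, and for (iii)$\Rightarrow$(i) you use exactly the same linear combination $\mathbf{z}=\mathbf{x}-(x_j y_j^{-1})\mathbf{y}$ to produce a codeword with $i\in\mbox{supp}(\mathbf{z})\subsetneq\mbox{supp}(\mathbf{x})$, contradicting $i$-minimality. The only difference is presentational: you split off the subcase $i\in\mbox{supp}(\mathbf{y})$ explicitly, whereas the paper absorbs it into the contradiction argument.
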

\begin{proof}
(i)$\Rightarrow$(ii)$\Rightarrow$(iii) are clear. Let us prove (iii)$\Rightarrow$(i). Let  $\mathbf{x}$ be $i$-minimal for some $i\in\mbox{supp}(\mathbf{x})$. If $\mathbf{x}$ were not minimal, it would exist $\mathbf{y}\in \mathcal{C}$,  $\mathbf{y}\neq \mathbf{0}$, such that  $\mbox{supp}(\mathbf{y})\subset \mbox{supp}(\mathbf{x})$. Furthermore $y_i=0$, since  $\mathbf{x}$ is $i$-minimal. Let $j\in\mbox{supp}(\mathbf{y})$ and let $\mathbf{z}=\mathbf{x}-(x_jy_j^{-1})\mathbf{y}\in \mathcal{C}$. Then $z_i=x_i\neq 0$ and $\mbox{supp}(\mathbf{z})\subset\mbox{supp}(\mathbf{x})$, which contradicts that $\mathbf{x}$  is $i$-minimal
\end{proof}

\begin{corollary} \label{corREC2}
Let $\mathcal{C}$ be a code of length $n$ and let $i\in R\subseteq \{1,\dots,n\}$.  The following statements are equivalent.\newline
(i) $R$ is a minimal recovery set for coordinate $i$; \newline
(ii) there exists a minimal codeword $\mathbf{w}\in \mathcal{C}^{\perp}$ such that $\mbox{\rm supp}(\mathbf{w})=R \cup\{i\}$.
\end{corollary}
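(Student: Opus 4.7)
The plan is to exploit the two tools already established: Proposition \ref{propREC}, which relates recovery sets to dual codewords with the right support, and Lemma \ref{iminimal}, which tells us that minimality of a codeword is controlled by $i$-minimality at any single $i$ in its support.

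For the implication (ii)$\Rightarrow$(i): assume $\mathbf{w}\in\mathcal{C}^{\perp}$ is minimal with $\mathrm{supp}(\mathbf{w})=R\cup\{i\}$. Proposition \ref{propREC} (iii)$\Rightarrow$(i) immediately yields that $R$ is a recovery set for $i$. To show it is minimal, I would argue by contradiction: if some $R'\subsetneq R$ were also a recovery set for $i$, then Proposition \ref{propREC} would produce $\mathbf{w}'\in\mathcal{C}^{\perp}$, necessarily nonzero, with $i\in\mathrm{supp}(\mathbf{w}')\subseteq R'\cup\{i\}\subsetneq R\cup\{i\}=\mathrm{supp}(\mathbf{w})$. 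Since $\mathbf{w}$ is minimal, Lemma \ref{iminimal} says it is $i$-minimal, forcing $\mathrm{supp}(\mathbf{w}')=\mathrm{supp}(\mathbf{w})$, which contradicts the strict inclusion.

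For the implication (i)$\Rightarrow$(ii): among all nonzero dual codewords $\mathbf{w}\in\mathcal{C}^{\perp}$ with $i\in\mathrm{supp}(\mathbf{w})\subseteq R\cup\{i\}$ (this set is nonempty by Proposition \ref{propREC}), pick one whose support has the smallest cardinality. I would then verify two things. First, $\mathbf{w}$ is $i$-minimal: any $\mathbf{y}\in\mathcal{C}^{\perp}$ with $i\in\mathrm{supp}(\mathbf{y})\subseteq\mathrm{supp}(\mathbf{w})$ also satisfies $\mathrm{supp}(\mathbf{y})\subseteq R\cup\{i\}$, so the minimal-cardinality choice forces $\mathrm{supp}(\mathbf{y})=\mathrm{supp}(\mathbf{w})$; Lemma \ref{iminimal} then upgrades this to full minimality. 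Second, $\mathrm{supp}(\mathbf{w})=R\cup\{i\}$: otherwise $\mathrm{supp}(\mathbf{w})\setminus\{i\}$ would be a proper subset of $R$ which, by Corollary \ref{corREC1}, is still a recovery set for $i$, contradicting minimality of $R$.

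There is no real obstacle here; both directions are a bookkeeping exercise once the right candidate codeword is identified. The only subtlety worth highlighting is why we are allowed to pick a codeword with \emph{exactly} the support $R\cup\{i\}$ rather than merely contained in it; this relies on using minimality of $R$ (via Corollary \ref{corREC1}) to rule out a strictly smaller support, and it is this step that makes the correspondence between minimal recovery sets and minimal dual codewords tight.
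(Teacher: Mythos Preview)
Your proof is correct and follows essentially the same approach as the paper: both directions invoke Proposition~\ref{propREC} to pass between recovery sets and dual codewords, Corollary~\ref{corREC1} to produce a smaller recovery set from a smaller support, and Lemma~\ref{iminimal} to upgrade $i$-minimality to minimality. The only cosmetic difference is that in (i)$\Rightarrow$(ii) you explicitly select a $\mathbf{w}$ of minimal support cardinality, whereas the paper takes an arbitrary $\mathbf{w}$ from Proposition~\ref{propREC}(iii) and argues that minimality of $R$ forces it to already have the required properties; your extra care is harmless but not needed.
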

\begin{proof}
(i)$\Rightarrow$(ii): Let $\mathbf{w}$ be the codeword ensured by item (iii) of Proposition  \ref{propREC}. If  either $\mathbf{w}$ were not $i$-minimal or $\mbox{\rm supp}(\mathbf{w})\neq R\cup\{i\}$, then it would exists  $\mathbf{v}\in \mathcal{C}^{\perp}$ such that $i\in \mbox{\rm supp}(\mathbf{v})\subset R\cup\{i\}$. According to Corollary \ref{corREC1}, $S=\mbox{supp}(\mathbf{v})\setminus\{i\}$ is a recovery set for $i$, which contradicts that $R$ is minimal. Lemma \ref{iminimal} ensures that $\mathbf{v}$ is a minimal word of $\mathcal{C}^{\perp}$. 
(i)$\Rightarrow$(ii): $R=\mbox{supp}(\mathbf{w})\setminus\{i\}$ is a recovery set for $i$, again according to Corollary \ref{corREC1}.
If $R$ were not minimal, then neither would $\mathbf{w}$ be $i$-minimal, according to  Proposition \ref{propREC}.
\end{proof}

So a recovery structure $\mathcal{R}=(R_i)_{i=1,\dots,n}$ for  $\mathcal{C}$ is equivalent to a family $\mathcal{W}=(\mathbf{w}_i)_{i=1,\dots,n}$ of words of $\mathcal{C}^{\perp}$ such that $i\in\mbox{supp}(\mathbf{w}_i)$: just take $R_i=\mbox{supp}(\mathbf{w}_i)\setminus \{i\}$. Besides, as seen before, the equalities $\mathbf{w}_i\cdot\mathbf{x}=0$, $i=1,\dots,n$, provide a method to compute any erased coordinate $x_i$ in a word $\mathbf{x}\in\mathcal{C}$. $\mathcal{R}$ is minimal iff so is each $\mathbf{w}_i$; and $\mathcal{R}$ is sharp iff $\sum_{i=1}^n \# \mbox{supp}(\mathbf{w}_i)$ is minimal among all families $\mathcal{W}'=(\mathbf{w}'_i)_{i=1,\dots,n}$ of words of $\mathcal{C}^{\perp}$ such that $i\in\mbox{supp}(\mathbf{w}'_i)$.

\begin{corollary}
Let $\mathcal{C}$ be a code and let $\mathcal{C}^{\perp}$ be its dual. Then ${\rm loc}(\mathcal{C})\ge d(\mathcal{C}^{\perp})-1$.
\end{corollary}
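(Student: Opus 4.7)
The plan is to reduce the claim directly to Proposition~\ref{propREC}, which turns each recovery set into a dual codeword whose weight is controlled. The main point is that a recovery set of size $r$ for coordinate $i$ forces the existence of a nonzero word of $\mathcal{C}^\perp$ of weight at most $r+1$, so if the dual distance is large then no recovery set can be small.

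First, I would fix an arbitrary recovery structure $\mathcal{R}=(R_i)_{i=1,\dots,n}$ of $\mathcal{C}$. For each coordinate $i$, Proposition~\ref{propREC}(iii) yields a codeword $\mathbf{w}_i\in\mathcal{C}^{\perp}$ with $i\in\mbox{supp}(\mathbf{w}_i)\subseteq R_i\cup\{i\}$; in particular $\mathbf{w}_i\neq \mathbf{0}$. Then
\begin{equation*}
d(\mathcal{C}^{\perp})\le \mbox{wt}(\mathbf{w}_i)=\#\mbox{supp}(\mathbf{w}_i)\le \#R_i+1,
\end{equation*}
so $\#R_i\ge d(\mathcal{C}^{\perp})-1$ for every $i$.

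Taking the maximum over $i$ gives $\mbox{loc}(\mathcal{R})\ge d(\mathcal{C}^{\perp})-1$, and then taking the minimum over all recovery structures $\mathcal{R}$ of $\mathcal{C}$ yields the desired inequality $\mbox{loc}(\mathcal{C})\ge d(\mathcal{C}^{\perp})-1$.

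There is no real obstacle here: the whole argument is a one-line weight estimate once Proposition~\ref{propREC} is available. The only thing to be careful about is that the inequality must hold for \emph{every} recovery structure, not just a sharp one, which is why I start by fixing an arbitrary $\mathcal{R}$ before passing to the minimum.
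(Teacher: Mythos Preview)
Your proof is correct and is exactly the argument the paper intends: the corollary is stated without proof immediately after the observation that a recovery structure $(R_i)$ corresponds to a family $(\mathbf{w}_i)\subseteq\mathcal{C}^\perp$ with $i\in\mbox{supp}(\mathbf{w}_i)\subseteq R_i\cup\{i\}$, and your weight estimate $d(\mathcal{C}^\perp)\le\mbox{wt}(\mathbf{w}_i)\le\#R_i+1$ is precisely the one-line consequence the reader is meant to supply.
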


The bound on ${\rm loc}(\mathcal{C})$ given by the previous corollary  improves sometimes the bound given by (\ref{LRCbound}), which shows again that the latter is not necessarily sharp.

\begin{example}\label{ej1}
Let us consider a $[9,4]$ code $\mathcal{C}$ over $\mathbb{F}_4=\{ 0, 1, \alpha, 1+\alpha\}$ with $\alpha^2=1+\alpha$. According to the Griesmer bound, its minimum distance verifies that $d(\mathcal{C})\le 5$. Indeed such a code with $d(\mathcal{C})=5$ exists. Consider, for example, the one given by the generator matrix
$$
G=\left(
\begin{array}{ccccccccc}
 1&0&0&0&\alpha& 1+\alpha& \alpha& 1+\alpha&1 \\  
 0&1&0&0& 1+\alpha& \alpha&1& 1+\alpha&0 \\
 0&0&1&0&0&\alpha&1+\alpha&1&1+\alpha \\
 0&0&0&1&\alpha&1 & \alpha&0&1+\alpha 
\end{array} \right)\in \mathbb F_4^{4\times 9}. 
$$
$\mathcal{C}$ is obtained from the $[12,6]$ extended Quadratic Residue code by shortenning twice and then puncturing; its minimum distance is $5$, see \cite{mint}.  From (\ref{LRCbound}) its locality verifies ${\rm loc}(\mathcal{C})\ge 2$. But being $\mathcal{C}$ an almost MDS code, its dual $\mathcal{C}^{\perp}$ must be a $[9,5,4]$ code, see \cite[Theorem 7]{FW}. Thus ${\rm loc}(\mathcal{C})\ge d(\mathcal{C}^{\perp})-1= 3$.
In fact, as we shall compute later in Example \ref{excont}, we have ${\rm loc}(\mathcal{C})= 3$.
\end{example}

\begin{remark}
There exists a remarkable connection between the repair problem and the theory of secret sharing.  A {\em secret sharing scheme (SSS)} is a  method for distributing a secret $s\in\mathbb{F}_{q}$ among a set of $n$ participants, 
each of whom receives a piece of the secret or {\em share}. The scheme is designed in such a way that only the authorized 
coalitions of participants can recover the secret, by pooling the shares of its members. 

SSS can be obtained from codes in several ways. The first method to do that was given in \cite{SSS}. 
Let  $\mathcal{C}$ be a $[n+1,k]$ code with generator matrix $\mathbf{G}$. A dealer computes the codeword $\mathbf{x}=\mathbf{z}\mathbf{G}$ from a random vector $\mathbf{z} \in\mathbb{F}_{q}^k$ subject to $x_{n+1}=s$, the secret to be shared. The share of the $i$-th participant is $x_i$, $1\le i\le n$.  Thus, a coalition $R$ of participants is authorized if and only if $R$ is a recovery set for coordinate $n+1$. The locality $\mbox{\rm loc}_{n+1}(\mathcal{C})$ is exactly the smaller size of an authorized coalition. Massey \cite{MAS} later introduced the concept of  minimal codeword and observed that the minimal authorized coalitions correspond to the minimal words of  $\mathcal{C}^{\perp}$ whose support contains the coordinate $n + 1$, that is,  to the minimal recovery sets for the coordinate $n + 1$. Therefore, the methods developed in this article can be adapted to compute the minimal authorized coalitions in this type of SSS.
\end{remark}

\section{Computing Sharp Recovery Structures}
\label{Sharp recovery structure}

In this section we develop an algorithm that provides a sharp recovery structure for an arbitrary linear code  $\mathcal{C}$, in terms of minimal codewords of its dual  $\mathcal{C}^{\perp}$, that is through  a family $\mathcal{W}=(\mathbf{w}_i)_{i=1,\dots,n}$ of minimal words of $\mathcal{C}^{\perp}$ such that $i\in\mbox{supp}(\mathbf{w}_i)$. Consequently, it also provides a method for recovery, following the formula of equation (\ref{recuperacion}), the locality of  $\mathcal{C}$ and its dual distance, as the smallest weight of a word in   $\mathcal{W}$.
The methods we will use come from the theory of  Gr\"obner basis, however we will try to avoid  Gr\"obner basis terminology.

Let $\zeta $ be a primitive element of $\mathbb F_q^*$. We define a total ordering $\prec_T$ on $\mathbb F_q^n$ as follows
\begin{equation}
 \mathbf x \prec_T \mathbf y \hbox{ if } \begin{cases}
\mbox{wt}(\mathbf x)< \mbox{wt}(\mathbf y); \hbox{ or}
\\
\mbox{wt}(\mathbf x)= \mbox{wt}(\mathbf y) \hbox{ and } (i_{x_1},\ldots,i_{x_n})  {\prec_{lex}} (i_{y_1},\ldots,i_{y_n})
\end{cases}
\end{equation}
where $i_{x_j}$ is the  element of  $[0,q-1]\cup \{-\infty\}$ such that $\zeta^{i_{x_j}}= x_j$ or $i_{x_j}=-\infty $ if $x_i=0$, and the same definition holds for the vector $\mathbf y$. Note that our ordering $\prec_T$ is mainly the order given by the Hamming weight with an extra total ordering for breaking ties. Remark that any other total order in the indices, different from $lex$ for breaking ties, will define an alternative order $\prec'_T $ that would be equally valid for our purposes.

\begin{example}
Let us consider the vectors $\mathbf x = (1,\alpha, 0, \alpha^2)$ and $\mathbf y = (\alpha^2, \alpha, 0 , \alpha)$ over $\mathbb F_4^4$, being $\alpha$ a primitive element of $\mathbb F_4^*$ with $\alpha^2 = 1+\alpha$. Then, $\mathbf x  \prec_T \mathbf y$ since $\mbox{wt}(\mathbf x) = \mbox{wt}(\mathbf y)$ and $(0,1,-\infty, 2)  {\prec_{lex}}  (2,1,0,1)$.
\end{example}

Next we recall some concepts from Gr\"obner basis theory that we will use in what follows.
We say that a codeword $\mathbf s\in \mathcal C$ is expresed as a {\em syzygy} if we write $\mathbf s=\mathbf s_1-\mathbf s_2$,  with $\mathbf s_1,\mathbf s_2\in \mathbb F_q^n$ and $\mathbf s_2 \prec_T\mathbf s_1$. In this case  $\mathbf s_1$ is called the {\em leading term} of the syzygy. Every codeword can be trivially  expresed in this form, by taking $\mathbf s_2=\mathbf 0$.  Given two vectors $\mathbf{t},\mathbf{x}\in \mathbb F_q^n$, we say that  
$\mathbf t$ {\em reduces}  $\mathbf{x}$  if $\mathbf{x}+ \mathbf{t} \prec_T \mathbf{x} $.
Then, the vector  $\mathbf{v}=\mathbf{x}+\mathbf{t}$ is called the  {\em reduction} of $\mathbf{x}$ by $\mathbf{t}$ and will be denoted by $\mathbf{x}\rightarrow_{\mathbf{t}}\mathbf{v}$. Let $\mathcal{T}\subseteq \mathbb{F}_q^n$. The vector $\mathbf{v}$ is a reduction of  $\mathbf{x}$  {\em by} $\mathcal{T}$ if  for a non-negative integer $s\in\mathbb N$ there are elements $\lambda_i\in \mathbb F_q$ and $\mathbf{t}_i\in \mathcal{T}$ $i=1,\ldots ,s$ such that
\begin{equation}\label{eq:red} \mathbf{x}\rightarrow_{\lambda_1\cdot\mathbf{t}_1}\mathbf{v}_1,\, \mathbf{v}_1\rightarrow_{\lambda_2\cdot\mathbf{t}_2}\mathbf{v}_2,\, \ldots \, \mathbf{v}_{n_1}\rightarrow_{\lambda_s\cdot\mathbf{t}_s}\mathbf{v},\end{equation}
and we will denote it by $\mathbf{x}\rightarrow_{\mathcal{T}}\mathbf{v}$. Note that, in general, a vector has not a unique reduction by a set  $\mathcal{T}$.

\begin{example}
Given the set 
$\mathcal T = \left\{ \mathbf t_1 = (0,1,1,0,0,1), \mathbf t_2 = (1,1,0,1,1,0)\right\}\subseteq \mathbb F_2^6$ and 
the vector $\mathbf x = (1,1,1,1,1,0)\in \mathbb F_2^6$. We can give two different reductions of $\mathbf x$ by $\mathcal T$:
$$\begin{array}{ccc}
\mathbf{x}\rightarrow_{\mathbf{t}_1}\mathbf{v}_1 = (1,0,0,1,1,1)
 & \hbox{ and } &
\mathbf{x}\rightarrow_{\mathbf{t}_2}\mathbf{v}_2 = (0,0,1,0,0,0)
\end{array}$$
\end{example}

%We will compute a special set $\mathcal{T}\subseteq \mathbb{F}_q^n$ of syzygies as follows:
%First we list all the non-zero syzygies of the code starting from the one with smallest leading term with respect to the ordering  $\prec_T$. Then we go through incrementally in the list w.r.t.  $\prec_T$ and we  remove from it all  those elements such that some of the non-zero coordinates on its leading term  component  are equal (taking into account the position and the value of the coordinate) to all the non-zero coordinates of a multiple of the leading term of a syzygy previously introduced. Also we will reduce the trailing terms with the previous  syzygies in the list. After this proccess  we get a set $\mathcal T$ called {\em Gr\"obner test set} for the code $\mathcal{C}$.

We will compute a special set $\mathcal{T}\subseteq \mathbb{F}_q^n$ of syzygies called a Gr\"obner test-set as follows.
\begin{enumerate}
\item First we list $\tt L$ all the non-zero syzygies of the code $\mathcal C$ and we sort them with respect to the ordering $\prec_T$ in its leading terms.
\item Then, we go incrementally through the list $\tt L$. We add to the set $\mathcal T$ and we remove from $\tt L$ all those elements such that at least one of its non-zero coordinates from its leading term is equal to a multiple of the leading term of an element in $\mathcal T$.
\item Finally we reduce the trailing terms using the list $\tt L$.
\end{enumerate}

\begin{example}\textbf{[Toy Example]}
Consider a binary code $\mathcal C$ with generator matrix 
$$G=\left(\begin{array}{ccc} 1 & 0 & 1 \\ 0 & 1 & 1 \end{array}\right)\in \mathbb F_2^{2\times 3}.$$
To obtain a Gr\"obner test-set $\mathcal T$ for $\mathcal C$ we proceed as follows.

We initialize $\tt L$ with all the non-zero syzygies of the code $\mathcal C$ and we sort them w.r.t. $\prec_T$.
$${\tt L} = \left\{ \begin{array}{cc}
(0,1,0) - (0,0,1), &
(1,0,0)-(0,0,1), \\
(1,0,0)-(0,1,0), &
(0,1,1)-(0,0,0),\\ 
(1,0,1)-(0,0,0), &
(1,1,0)-(0,0,0)
\end{array}
\right\}$$
\begin{itemize}
\item We add $(0,1,0)-(0,0,1)$ to $\mathcal T$ and we omit from $\tt L$ all multiples of $(0,1,0)$.
\item We add $(1,0,0)-(0,1,0)$ to $\mathcal T$ and we omit from $\tt L$ all multiples of $(1,0,0)$.
\end{itemize}
The process ends because the list $\tt L$ is empty. Now $\mathcal T$ is a Gr\"obner test-set.
\end{example}

Note that $\mathcal{T}$ depends only on the code $\mathcal{C}$ and the ordering $\prec_T$ and  can be computed from a set of generators  by means of the  Gr\"obner basis procedure based on linear algebra FGLM techniques, stated in \cite{MMS}.  % (but note that our deffinition of test set is slightly different  to the one in that paper, as it is adapted to our settings).  
 Moreover the reduction of  a vector in $\mathbb F_q^n$ by this $\mathcal T$ is show to be unique and independent of the order of the reductions in (\ref{eq:red}).
 Algorithms \ref{Algor-1} gives the reduction of an element $\mathbf{x}$ by a Gr\"obner test set $\mathcal{T}$.

\begin{algorithm}[H] \label{Algor-1}
\KwData{A  Gr\"obner test set $\mathcal{T}\subseteq \mathbb{F}_q^n$ and a vector $\mathbf x\in \mathbb F_q^n$.}
\KwResult{Reduced form of $\mathbf x$}
initialization  $\mathbf r \leftarrow \mathbf 0$\;
\While{there exist  $\mathbf t\in \mathcal T$ and  $\lambda\in\mathbb F_q$ such that $\mathbf x-\lambda \mathbf t \prec_T \mathbf x$\;}{
 $\mathbf r \leftarrow \mathbf r -\lambda \mathbf t$ \;}
\caption{Reduction by the Gr\"obner test set $\mathcal T$.}
\end{algorithm}

The Gr\"obner  test set $\mathcal{T}$ defines a reduction that allows to check whether a  vector  is in the code $\mathcal{C}$ if and only it reduces to the zero vector. Furthermore it allows us to compute a sharp recovery structure for $\mathcal C$.

\begin{proposition}
Let $\mathcal T$ be a Gr\"obner test set for the code $\mathcal C$ and let $\mathbf{x}\in\mathbb{F}_q^n$. Then $\mathbf{x} \in \mathcal{C}$ if and only if $\mathbf{x}$ reduces to $\mathbf 0$ by $\mathcal{T}$. 
\end{proposition}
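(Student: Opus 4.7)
The plan is to leverage two elementary features of the construction. First, every element $\mathbf{t}\in\mathcal{T}$ is itself a codeword of $\mathcal{C}$, because the syzygies $\mathbf{s}_1-\mathbf{s}_2$ that populate the initial list $\tt L$ are codewords by definition, and the manipulations in Step (3) (reducing trailing terms by other syzygies) keep each entry in $\mathcal{C}$. Second, $\prec_T$ is a well-ordering on $\mathbb{F}_q^n$ (the Hamming weight is bounded below by $0$ and only finitely many vectors share any given weight), so any strictly $\prec_T$-decreasing chain must terminate.

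The implication $(\Leftarrow)$ follows immediately from the first observation. If $\mathbf{x}\rightarrow_{\lambda_1\mathbf{t}_1}\mathbf{v}_1\rightarrow\cdots\rightarrow_{\lambda_s\mathbf{t}_s}\mathbf{0}$, telescoping gives $\mathbf{x}=\sum_{i=1}^{s}\lambda_i\mathbf{t}_i$, and linearity of $\mathcal{C}$ yields $\mathbf{x}\in\mathcal{C}$.

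For $(\Rightarrow)$ I would reduce the statement to the following key claim: every nonzero $\mathbf{x}\in\mathcal{C}$ admits at least one reduction step by $\mathcal{T}$, i.e., there exist $\mathbf{t}\in\mathcal{T}$ and $\lambda\in\mathbb{F}_q$ with $\mathbf{x}-\lambda\mathbf{t}\prec_T\mathbf{x}$. Granted the claim, the while-loop of Algorithm~\ref{Algor-1} applied to $\mathbf{x}\in\mathcal{C}$ produces a strictly $\prec_T$-decreasing sequence that stays inside the coset $\mathbf{x}+\mathcal{C}=\mathcal{C}$; by well-foundedness this sequence must terminate, and it cannot halt at any nonzero codeword (which would itself be reducible by the claim), so it must halt at $\mathbf{0}$.

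The main obstacle is the claim itself. My strategy would be to view $\mathbf{x}$ as the leading term of the trivial syzygy $\mathbf{x}-\mathbf{0}$, which, being a nonzero syzygy of $\mathcal{C}$, belongs to the initial list $\tt L$. The pruning rule in Step (2) of the construction of $\mathcal{T}$ then leaves only two possibilities: either this syzygy survived and is itself an element of $\mathcal{T}$, in which case $\lambda=1$ and $\mathbf{t}=\mathbf{x}$ give $\mathbf{x}-\mathbf{t}=\mathbf{0}\prec_T\mathbf{x}$; or else the leading term $\mathbf{x}$ already agrees, in one of its nonzero coordinates, with a scalar multiple of the leading term of some earlier-added $\mathbf{t}\in\mathcal{T}$, and subtracting the corresponding $\lambda\mathbf{t}$ strictly decreases the leading term with respect to $\prec_T$. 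Step (3) (the trailing-term reduction) is what guarantees that this decrease is genuine and not masked by a coincidence in the lexicographic tiebreaker. Formalizing the ``divisibility of leading terms'' in a rigorous way is the delicate point; here I would simply appeal to the FGLM-type framework of \cite{MMS}, where exactly this structural property of a Gr\"obner test set of a linear code is established in the generality we need.
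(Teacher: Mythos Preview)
The paper does not actually prove this proposition: its entire proof reads ``For a proof see \cite{MMS}.'' Your proposal therefore goes further than the paper itself. Your backward direction $(\Leftarrow)$ is correct and complete (up to a harmless sign in the telescoping sum). For the forward direction $(\Rightarrow)$ you correctly isolate the key claim---that every nonzero codeword admits a reduction step by $\mathcal{T}$---and give a plausible heuristic via the pruning rule in Step~(2), but you rightly flag that turning ``one nonzero coordinate matches a multiple of a leading term'' into a genuine $\prec_T$-decrease is the delicate point, and you end up deferring to \cite{MMS} for it. Since that is precisely the reference the paper cites in lieu of a proof, your approach and the paper's are ultimately the same: both rest on the Gr\"obner test-set machinery developed in \cite{MMS}. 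Your sketch adds useful intuition the paper omits, but neither you nor the paper supplies a self-contained argument for the forward implication.
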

\begin{proof}
For a proof see \cite{MMS}.
\end{proof}

The key idea on which our algorithm is based in the following result.

\begin{proposition} \label{propSRS} 
Let $\mathcal T$ be a Gr\"obner test set for the code $\mathcal C^\perp$. For each coordinate $i$ let $\mathcal{T}_i=\{ \mathbf{t} \in \mathcal{T} : i\in\mathrm{supp}(\mathbf{t})\}$ and let $\mathbf t_i$ be an element of minimum Hamming weight  in $\mathcal{T}_i$.   Then $(\mathrm{supp}(\mathbf{t}_i)\setminus \{i\})_{i=1,\dots,n}$ is a sharp recovery structure for $\mathcal C$.
\end{proposition}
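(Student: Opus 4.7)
The plan is to verify first that the family is a recovery structure, and then that the locality at each coordinate is actually attained. The first part is immediate from Corollary \ref{corREC1}: since each $\mathbf{t}_i\in \mathcal{T}\subseteq \mathcal{C}^\perp$ is nonzero with $i\in\mathrm{supp}(\mathbf{t}_i)$, the set $R_i:=\mathrm{supp}(\mathbf{t}_i)\setminus\{i\}$ is a recovery set for coordinate $i$, so $(R_i)_{i=1,\dots,n}$ is indeed an elementary recovery structure of $\mathcal{C}$.

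To translate sharpness into a weight question, recall from the paragraph following Corollary \ref{corREC2} that every elementary recovery structure of $\mathcal{C}$ corresponds to a family $(\mathbf{w}_i)$ of nonzero dual codewords with $i\in\mathrm{supp}(\mathbf{w}_i)$, via $R_i=\mathrm{supp}(\mathbf{w}_i)\setminus\{i\}$. Consequently
$$\mathrm{loc}_i(\mathcal{C})=m_i-1,\qquad m_i:=\min\{\mathrm{wt}(\mathbf{w}):\mathbf{w}\in\mathcal{C}^\perp\setminus\{\mathbf{0}\},\,i\in\mathrm{supp}(\mathbf{w})\}.$$
Since $\mathbf{t}_i\in\mathcal{T}_i$ is itself a nonzero dual codeword with $i$ in its support, one gets for free $\mathrm{wt}(\mathbf{t}_i)\ge m_i$, and hence $\#R_i\ge\mathrm{loc}_i(\mathcal{C})$. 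Sharpness therefore reduces to the reverse inequality $\mathrm{wt}(\mathbf{t}_i)\le m_i$, i.e.\ to showing that $\mathcal{T}$ contains some codeword of weight $m_i$ whose support includes $i$.

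For the reverse inequality, the idea is to pick a minimum-weight codeword $\mathbf{w}^*\in\mathcal{C}^\perp$ with $i\in\mathrm{supp}(\mathbf{w}^*)$ (so $\mathrm{wt}(\mathbf{w}^*)=m_i$) and to follow a reduction chain $\mathbf{w}^*=\mathbf{v}_0\to\mathbf{v}_1\to\cdots\to\mathbf{v}_s=\mathbf{0}$ by $\mathcal{T}$. Each $\mathbf{v}_k$ lies in $\mathcal{C}^\perp$, and since $\prec_T$ refines the Hamming-weight order we have $\mathrm{wt}(\mathbf{v}_k)\le\mathrm{wt}(\mathbf{w}^*)=m_i$ for every $k$. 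Because $i$ is in $\mathrm{supp}(\mathbf{v}_0)$ but not in $\mathrm{supp}(\mathbf{v}_s)=\emptyset$, there is a first step $k$ at which $i$ leaves the support; the test-set element $\mathbf{t}$ used at that step must satisfy $t_i\ne 0$, so $\mathbf{t}\in\mathcal{T}_i$.

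The main obstacle is to bound $\mathrm{wt}(\mathbf{t})$ itself, since a priori a codeword $\mathbf{t}=\mathbf{s}_1-\mathbf{s}_2\in\mathcal{T}$ with $\mathbf{s}_2\prec_T\mathbf{s}_1$ may have weight exceeding both $\mathrm{wt}(\mathbf{s}_1)$ and $\mathrm{wt}(\mathbf{s}_2)$. The crucial input, which I would cite from \cite{MMS}, is that after the final reduction of trailing terms in the construction of $\mathcal{T}$ each $\mathbf{s}_2$ is the unique normal form (coset leader) of $\mathbf{s}_1$ modulo $\mathcal{C}^\perp$. Combined with the weight-first character of $\prec_T$ and the fact that at the step where $i$ leaves the support the leading term $\mathbf{s}_1$ of $\mathbf{t}$ is ``divisible into'' $\mathbf{v}_k$ in the monomial sense, this yields $\mathrm{wt}(\mathbf{t})\le\mathrm{wt}(\mathbf{v}_k)\le m_i$. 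Transferring the monomial divisibility of leading terms into an honest weight bound on the full codeword is precisely where the Gr\"obner-basis machinery of \cite{MMS} does the real work, and it is the step I expect to require the most care.
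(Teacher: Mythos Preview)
Your strategy coincides with the paper's: reduce sharpness to $\mathrm{wt}(\mathbf{t}_i)\le m_i$, follow the $\mathcal{T}$-reduction chain of a minimum-weight dual codeword $\mathbf{w}^*$ with $i$ in its support down to $\mathbf{0}$, and isolate the test-set element $\mathbf{t}$ that first kills coordinate~$i$. The paper picks $\mathbf{w}^*=\mathbf{m}_i$ to be the $\prec_T$-smallest such codeword rather than merely one of minimum weight; this forces $i$ to vanish already at the first reduction step ($k=1$ in your notation), a harmless simplification since your $\mathbf{v}_{k-1}$ necessarily has weight exactly $m_i$ anyway.

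The point of departure is the weight bound on $\mathbf{t}$, which you correctly flag as the crux. The paper does \emph{not} pass through leading-term divisibility or normal-form properties imported from \cite{MMS}. It argues directly by contradiction: assuming $\mathrm{wt}(\mathbf{t})>m_i$, set $\mathbf{r}=\mathbf{t}-\mathbf{m}_i$; then $\mathrm{wt}(\mathbf{r})=\mathrm{wt}(\mathbf{v}_1)<m_i<\mathrm{wt}(\mathbf{t})$, so $\mathbf{r}$ is a syzygy with leading term $\mathbf{t}$, and subtracting it reduces $\mathbf{t}$ to $\mathbf{m}_i\prec_T\mathbf{t}$ (without cancelling the $i$-th position). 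This contradicts the defining property that Gr\"obner test-set elements are irreducible syzygies. Your proposed route via ``$\mathbf{s}_1$ divides $\mathbf{v}_{k}$ and $\mathbf{s}_2$ is its coset leader'' is less secure: even granting both facts, when $\mathbf{s}_1,\mathbf{s}_2$ have disjoint supports one has $\mathrm{wt}(\mathbf{t})=\mathrm{wt}(\mathbf{s}_1)+\mathrm{wt}(\mathbf{s}_2)$, and nothing in that picture immediately bounds this sum by $\mathrm{wt}(\mathbf{v}_{k-1})$. So the step you anticipated would need the most care is exactly where the paper's argument is both simpler and different in kind from what you outline.
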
 
\begin{proof}  
Let  $\mathbf m_i\in \mathcal C^\perp$ be the smallest $i$-minimal codeword with respect to the total ordering $\prec_T$. We will prove that  $\mbox{wt}(\mathbf{t}_i) = \mbox{wt}(\mathbf{m}_i)$. {Note that since $\mathbf m_i$ is in the code $\mathcal C^\perp$ then  $\mathbf{m}_i\rightarrow_{\mathcal{T}}\mathbf{0}$. There is a non-negative integer $s\in\mathbb N$ such that the set of reductions in Algorithm~\ref{Algor-1} for $\mathbf{m}_i\rightarrow_{\mathcal{T}}\mathbf{0}$ can be  expresed in the form 
\begin{equation}\label{eq:red0} \mathbf{m}_i\rightarrow_{\lambda_1\cdot\mathbf{t}_1}\mathbf{v}_1,\, \mathbf{v}_1\rightarrow_{\lambda_2\cdot\mathbf{t}_2}\mathbf{v}_2,\, \ldots \, \mathbf{v}_{n_1}\rightarrow_{\lambda_s\cdot\mathbf{t}_s}\mathbf{0}.\end{equation}
	
	Thus there must be a vector  $\mathbf t=\lambda_j \mathbf t_j$, $j\in \{1,\ldots ,s\}$   such that it is the first one involved in the chain of reductions in (\ref{eq:red0}) with $i\notin\mathrm{supp}(\mathbf{v_i})$. Indeed it is clear that $j=1$ since $i$ is in the support of the codewords $\mathbf{v}_{1},\ldots , \mathbf{v}_{i-1}$ and $\mathbf{m}_{i} \succ_T \mathbf{v}_{1}\succ_T\ldots \succ_T \mathbf{v}_{j-1}$ which contradices  the minimality of $\mathbf m_i$  with respect to the total ordering $\prec_T$.	Without loss of generality we can suppose that $\mathbf t_j=\mathbf t_i$.
	
	 Let $\mathbf r=  \mathbf t - \mathbf m_i$ and note that $\mathbf r\prec_T \mathbf m_i$ since $\mathrm{wt}(\mathbf r)=\mathrm{wt}(\mathbf v_1)< \mathrm{wt}(\mathbf m_i)$.  Suppose now that $\mbox{wt}(\mathbf m_i) < \mbox{wt}(\mathbf t)= \mbox{wt}(\mathbf t_i)$,   hence  $\mathbf r$  is expresed as the  syzygy $ \mathbf t - \mathbf m_i$. The syzygy $\mathbf r$  can be used to reduce  $\mathbf t$  as follows $ \mathbf{t}\rightarrow_{\mathbf r}\mathbf{m}_i$, since $(\mathbf t-\mathbf r)= \mathbf m_i$ without cancelling the $i$-th position  which is a contradiction with the fact that $\mathbf t_j\in\mathcal T$ since the test set contains syzygies that cannot be reduced. Hence  $\mbox{wt}(\mathbf m_i) = \mbox{wt}(\mathbf t)$ and we are done.}
\end{proof}

According to Proposition \ref{propSRS}, the minimal words $\mathbf{w} _i\in \mathcal{C}^{\perp}$  providing a sharp recovery structure for $\mathcal{C}$ can always be found in a test set $\mathcal{T}$, and it is not necessary to look for them in the  whole code $\mathcal{C}^{\perp}$. This idea is used in Algorithm~\ref{Algor-2} that  provides such a sharp recovery structure for $\mathcal C$. Roughly speaking, this algorithm applies Gaussian elimination to a large sparse matrix (namely list $\mathcal G_T$ in this case),  taking into account that we stop once we have enough codewords to cover all indices in $\{ 1,\ldots ,n\}$. Note that it  only requires a parity check matrix $H$ of the code $\mathcal C$ to run.

%%%%%%%%%%%%%%%%%%%%

\begin{algorithm}[t] \label{Algor-2}

\SetAlgoLined
\SetSideCommentRight
\SetNoFillComment

\KwData{A parity check matrix $H$ of an $[n,k]_q$ linear code $\mathcal C$}
\KwResult{A sharp recovery structure for $\mathcal C$}

\textbf{Initialization}

$\tt List1 \leftarrow$ every row of $H$ and all its multiples in $\mathbb F_q$\;
$\tt List2 \leftarrow $ the set of all $n$-tuples in $\mathbb F_q$ of weight less than $\tt n-k+2$ \;
$\tt Recovery\_struc$\;

$\mathbf v \leftarrow \mathbf 0$; ~~ 
$r\leftarrow 0$\;

\While{the support of $\tt Recover\_struc \neq \{1, \ldots, n\}$}{
	\For{$\tt \mathbf g$ in $\tt List1$}{
	$\tt \mathbf w =(\mathbf w[1], \mathbf w[2]) \leftarrow \left( \mathbf v, \mathbf v + \mathbf g \right)$\;
	\If{$\tt \mathbf w[1]$ is not a multiple of the leading terms of $\tt \mathcal G_T$}{
		\tcc{Let $\tt g=(g[1],g[2])\in \mathcal G_T$, the leading term of $g$ is $\tt g[1]$}
		$\tt j = Member(\mathbf w[2], [\mathbf v_1[2], \ldots, \mathbf v_r[2]])$\;
		\tcc{\small $\tt Member(v,[v_1, \ldots, v_r])$ returns $\tt j$ if $\tt v=v_j$ or $\tt false$ o/w}
		\If{$\tt j \neq false$}{
			$\mathbf g = (g[1], g[2]) \leftarrow \left(\mathbf w[1] , \mathbf v_j[1]\right)$\;
			$\mathcal G_T \leftarrow \mathcal G_T \cup \{ \mathbf g\}$\;
			\If{ the support of $\tt \mathbf g$ is not contained in the support of $\tt Recovery\_struc$}{
				$\tt Recovery\_Struc \rightarrow Recovery\_Struc \cup \{\mathbf g\}$\;
			}
		}
		\Else{
			$\tt r\leftarrow r+1$ ~~ and ~~ 
			$\tt \mathbf v_r \leftarrow \mathbf w$
		}
	
	}
		
	}
	$\tt \mathbf v = NextTerm(List2)$\;
	\tcc{\small $\tt NextTerm(List)$ removes the first element from the list $\tt List$ and returns it.}
}

\caption{Algorithm for computing a sharp recovery structure for $\mathcal C$}
\end{algorithm}

\begin{remark}
If $q=2$, then the parity check matrix of some codes can be seen as the adjacency matrix of a non-directed graph. In these cases, Proposition~\ref{propSRS} ensures that  the procedure in Algorithm \ref{Algor-2}  performs  the Horton's algorithm for computing a minumum cycle basis of the  graph \cite{BBFM,Horton}. Thus, our Algorithm~\ref{Algor-2} extends the Horton's algorithm from cycles in graphs to recovery sets in codes.
\end{remark}

\begin{example} \label{excont}
Let us consider again the code $\mathcal{C}$ of Example~\ref{ej1}.  A parity check matrix of $\mathcal{C}$ is easily obtained from its generator matrix. 
The Gr\"obner test set consists of 49 codewords. The following four among them define a  sharp recovery structure $\mathcal R = \left\{ R_1, \ldots, R_9 \right\}$ for $\mathcal{C}$
$$
\begin{array}{lcl}
\mathbf t_1 = (\alpha, \alpha, \alpha+1, 0,0,0,0,\alpha + 1 , 0) & & \mathrm{supp}(\mathbf t_1) = \left\{ 1,2,3,8\right\}\\
\mathbf t_2 = (\alpha, \alpha+1, 0,\alpha,1,0,0,0, 0) & & \mathrm{supp}(\mathbf t_2) = \left\{ 1,2,4,5\right\}\\
\mathbf t_3 = (\alpha, 1, 0,0,0,1, \alpha+1, 0, 0) & & \mathrm{supp}(\mathbf t_3) = \left\{ 1,2,6,7\right\}\\
\mathbf t_4 = (\alpha, 0, 0,1,1,0,0,0, \alpha) & & \mathrm{supp}(\mathbf t_4) = \left\{ 1,3,4,9\right\}\\
\end{array}
$$
with
$$ 
\begin{array}{ccc} 
R_1 = \mathrm{supp}(\mathbf t_1) \setminus \{1\} &
R_2 = \mathrm{supp}(\mathbf t_1) \setminus \{2\} &
R_3 = \mathrm{supp}(\mathbf t_1) \setminus \{3\}\\
R_4 = \mathrm{supp}(\mathbf t_2) \setminus \{4\} &
R_5 = \mathrm{supp}(\mathbf t_2) \setminus \{5\}&
R_6 = \mathrm{supp}(\mathbf t_3) \setminus \{6\}\\
R_7 = \mathrm{supp}(\mathbf t_3) \setminus \{7\}&
R_8 = \mathrm{supp}(\mathbf t_1) \setminus \{8\}&
R_9 = \mathrm{supp}(\mathbf t_4) \setminus \{9\}.\\
\end{array}  
$$
In particular,  ${\rm loc}(\mathcal{C})=3$ and  $d(\mathcal{C}^{\perp})= 4$, as announced in Example \ref{ej1}.
\end{example}

\begin{proposition}
Algorithm \ref{Algor-2} is correct and provides a sharp recovery structure for the code $\mathcal C$ with parity check matrix $H$.
\end{proposition}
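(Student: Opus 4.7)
The plan is to reduce the statement to Proposition \ref{propSRS}: once we verify that the vectors recorded in \texttt{Recovery\_struc} consist, for each coordinate $i$, of a minimum-weight element of $\mathcal{T}_i$ for some Gr\"obner test set $\mathcal{T}$ of $\mathcal{C}^\perp$, the proposition immediately yields a sharp recovery structure. Two items therefore remain to be checked: (a) the set $\mathcal{G}_T$ maintained by the main loop is contained in such a Gr\"obner test set of $\mathcal{C}^\perp$, and (b) the exit condition is reached with the right elements in place.

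For (a), the rows of $H$ together with their scalar multiples span $\mathcal{C}^\perp$, so \texttt{List1} supplies a generating set. The inner loop forms candidate syzygies $\mathbf{w}=(\mathbf{v},\mathbf{v}+\mathbf{g})$ with $\mathbf{g}\in\mathcal{C}^\perp$; the test that $\mathbf{w}[1]$ is not a multiple of the leading term of an existing element of $\mathcal{G}_T$ is exactly the irreducibility condition characterising test-set elements. The membership search against stored tails $\mathbf{v}_j[2]$ detects when $\mathbf{v}-\mathbf{v}_j[1]\in\mathcal{C}^\perp$, producing a new irreducible syzygy with leading term $\mathbf{v}$ and trailing term $\mathbf{v}_j[1]$. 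This is the FGLM-style procedure underlying the Gr\"obner-basis construction of \cite{MMS}, so every element appended to $\mathcal{G}_T$ belongs to some Gr\"obner test set $\mathcal{T}$ of $\mathcal{C}^\perp$.

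For (b), the outer loop dequeues vectors from \texttt{List2} through \texttt{NextTerm} in the order $\prec_T$, and hence in non-decreasing Hamming weight. Consequently the first element added to $\mathcal{G}_T$ whose support contains coordinate $i$ has minimum weight among all elements of $\mathcal{T}$ that cover $i$. By Proposition \ref{propSRS} this element coincides with $\mathbf{t}_i$, and it is recorded in \texttt{Recovery\_struc} exactly the first time it appears, so the final output is the family $(\mathrm{supp}(\mathbf{t}_i)\setminus\{i\})_{i=1,\ldots,n}$ (with the obvious repetitions whenever a single test-set element covers several coordinates at once). Termination is guaranteed because $d(\mathcal{C})>1$ implies that every coordinate is locally recoverable, so some element of the full Gr\"obner test set has $i$ in its support; since \texttt{List2} is finite, every coordinate is covered after finitely many iterations and the while-condition eventually fails.

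The main obstacle is step (a): making the identification between $\mathcal{G}_T$ and a genuine Gr\"obner test set requires careful bookkeeping to verify that the incremental construction neither discards relevant syzygies nor admits reducible ones, and that the leading-term divisibility check faithfully mirrors the irreducibility property of test-set elements. The agreement of \texttt{List2} with the $\prec_T$ ordering must also be confirmed, since the weight-monotonicity argument used in (b) depends on processing leading-term candidates in exactly this order.
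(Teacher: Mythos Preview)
Your approach is essentially the same as the paper's: both arguments explain that Algorithm~\ref{Algor-2} incrementally builds (part of) a Gr\"obner test set for $\mathcal{C}^\perp$ via the FGLM-style procedure of \cite{MMS}, and that processing \texttt{List2} in $\prec_T$ order guarantees that the first test-set element covering each coordinate has minimum weight, so Proposition~\ref{propSRS} applies. The paper's proof is really a narrated walkthrough of the algorithm---it re-describes the initialization and the main loop and asserts that each match produces a ``codeword of minimal support'' without further justification---whereas your decomposition into (a) $\mathcal{G}_T\subseteq\mathcal{T}$ and (b) weight-monotonicity plus termination is more explicit and honestly flags the bookkeeping required in (a) as the crux. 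One point you might tighten in (b): the outer loop enumerates leading terms $\mathbf{v}$ in $\prec_T$ order, but the codeword recorded is $\mathbf{v}-\mathbf{v}_j[1]$, so the claim that the first element covering $i$ has minimum weight among $\mathcal{T}_i$ deserves a sentence linking the weight of the syzygy to the weight of its leading term; the paper does not address this either.
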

\begin{proof}
The first stage involves the initalization of the algorithm with a list  $\tt List1$ whose elements are the rows of the parity check matrix $H$ and all their  non-zero scalar multiples. That is:
$\tt List1=\left\{ \lambda h_i \hbox{ with } \lambda \in \mathbb F_q\setminus\{0\} \hbox{ and } h_i \hbox{ row of }H\right\}$
with  $(n-k)(q-1)$ items. Now, the list $\mathcal G_T$ that we will use during the algorithm is initialized with the elements of $\tt List1$ and we add to all this elements the zero vector as label (to represent the trivial syzygies of the elements in the list).
In another list, namely $\tt List2$, we have all vectors of $\mathbb F_q^n$ of weight less or equal to $n-k+2$, we sort this list w.r.t. $\prec_T$.

Then, at each step we remove the first element $\mathbf w$ from the list $\tt List2$. 
If any of the vectors $\mathbf w+\mathbf h$ with $\mathbf h$ being an item of $\tt List1$ coincide with the $j$-th element of $\mathcal G_T$ then, the difference between $\mathbf w$ and the label corresponding to $\mathcal G_T[j]$ form a  codeword of minimal support of the dual code of $\mathcal C$. Otherwise, we add the items:
$\left\{ \mathbf w + \mathbf h  \mid \forall \mathbf h \hbox{ item of  } \tt List 1\right\}$
as new elements of the list $\mathcal G_T$ with the vector $\mathbf w$ as label (i.e. we add the new reduced syzygies). We repeat this process until we get enough codewords to achieve a sharp recovery structure for the code $\mathcal C$.
\end{proof}

\begin{proposition}
Let $\mathcal C$ be a linear code of length $n$ and dimension $k$  over $\mathbb F_q$. If the ground field operations need one unit time, then Algorithm \ref{Algor-2} applied to $\mathcal{C}$ takes time $\mathcal O(Dn(n-k) (q-1) \mathrm{log}(q))$, where $D$ is the total number of iterations of  the algorithm. 
\end{proposition}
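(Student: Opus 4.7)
The plan is to bound the cost of a single execution of the outer \textbf{while} loop and then multiply by $D$. The initialization phase is dominated by the construction of $\tt List1$, which contains the $(n-k)(q-1)$ non-zero scalar multiples of the rows of $H$; storing them and labelling each with the trivial syzygy costs $O(n(n-k)(q-1)\log q)$ arithmetic operations, since each entry is an element of $\mathbb{F}_q$ whose encoding and field arithmetic take $O(\log q)$ bit-operations. This cost is absorbed into the main bound once $D\ge 1$.

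For one pass through the outer loop I would argue as follows. Fix the current vector $\mathbf{v}$ taken from $\tt List2$. The inner \textbf{for} loop runs over all $(n-k)(q-1)$ elements $\mathbf{g}$ of $\tt List1$, and for each such $\mathbf{g}$ the algorithm performs: (a) the vector addition $\mathbf{v}+\mathbf{g}$ in $\mathbb{F}_q^n$; (b) the test ``$\mathbf{w}[1]$ is not a multiple of the leading term of any element of $\mathcal{G}_T$''; (c) the membership test $\mathrm{Member}(\mathbf{w}[2],\cdots)$; and, if applicable, (d) the insertion of a new pair into $\mathcal{G}_T$ together with a support check against $\tt Recovery\_struc$. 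Step (a) costs $O(n\log q)$. Steps (b), (c), (d) all reduce to looking up the vector $\mathbf{w}[1]$ (resp.\ $\mathbf{w}[2]$) in a dictionary keyed by vectors of $\mathbb{F}_q^n$; using a hash table indexed by the normalised support pattern of the leading term, each such query and each insertion is paid for with $O(n\log q)$ bit operations for hashing and comparing coordinates. Altogether each inner iteration costs $O(n\log q)$, so one full outer iteration costs $O(n(n-k)(q-1)\log q)$.

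Multiplying by the total number $D$ of outer iterations gives the claimed bound $\mathcal{O}(Dn(n-k)(q-1)\log q)$. The support-update of $\tt Recovery\_struc$ contributes only an additional $O(n)$ per successful insertion, which is dominated by the inner-loop cost and therefore does not affect the asymptotics.

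The only genuinely delicate point is the accounting for steps (b) and (c): a naive implementation would make them scale with $|\mathcal{G}_T|$, which can become large, and the final bound would then hide this dependence inside $D$. The proof therefore hinges on implementing $\mathcal{G}_T$ with a hash-indexed data structure so that lookups and insertions run in time linear in the vector length $n$ (up to the field-arithmetic factor $\log q$) rather than linear in $|\mathcal{G}_T|$. Once this implementation detail is fixed, the per-iteration estimate $O(n(n-k)(q-1)\log q)$ follows immediately from a straightforward operation count, and the stated complexity is obtained by summing over the $D$ outer iterations.
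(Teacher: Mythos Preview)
Your argument is correct and follows essentially the same line as the paper's proof: both arrive at the bound by observing that each of the $D$ outer iterations performs $(n-k)(q-1)$ inner steps, each costing $O(n\log q)$. The only difference is one of presentation. The paper, following \cite{MMS}, bounds the total size of the list $\mathcal G_T$ after $D$ iterations by roughly $D(n-k)(q-1)$ and multiplies by the $O(n\log q)$ cost of a single vector comparison, leaving the constant-time lookup/insertion assumption implicit in the cited reference. You instead make the data-structure assumption explicit by invoking a hash-indexed dictionary so that steps~(b)--(d) are genuinely $O(n\log q)$ rather than linear in $|\mathcal G_T|$; this is precisely the point the paper's proof glosses over, and your remark that a naive implementation would push the dependence on $|\mathcal G_T|$ into the hidden constants is well taken. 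Substantively the two proofs are the same; yours is simply more careful about where the per-step bound comes from.
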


\begin{proof}
The proof is similar to the proof of \cite[Theorem 4.3]{MMS}, adapted to the changes made in Algorithm \ref{Algor-2} versus \cite[Algorithm 2]{MMS}. The hardest part of our algorithm is the management of the list $\mathcal G_T$. In each main loop iteration, up to $(n-k)(q-1)$ new elements are added to the list $\mathcal G_T$, then compared and finally redundancy is eliminated. Note that comparing two vectors in $\mathbb F_q^n$ requires $\mathcal O(n\log (q))$ field operations.
At iteration $i$, after inserting the new elements in the list $\mathcal G_T$, we have at most
$(q-1) (n-k) + (i (q-1)(n-k) - i)$ elements. Here, the first summand corresponds to the elements that initialized $\tt List1$, while the second one comes from the fact that at each iteration the first element is removed and we add $(n-k)(q-1)$ new elements.
If $D$ is an upper bound for the number of iterations of Algorithm \ref{Algor-2}, this gives a total time of order
$$
\mathcal O\left(n \log(q) \left( (q-1)(n-k) + D(q-1)(n-k)-D \right)\right) \sim \mathcal O\left( D n (n-k) (q-1) \log(q)\right).
$$
\end{proof}

\begin{remark} (1) The number of iterations in Algorithm \ref{Algor-2}, $D$, is upper bounded by the fact that the weight of a minimal codeword is at most $n-k+1$ \cite[Lemma 2.1]{Minimal}. 
Thus, it follows that 
$$D\leq \sum_{i=0}^{n-k+1}\binom{n}{i} (q-1)^i.$$  
(2)  Note that this algorithm also provides the dual distance of $\mathcal{C}^{\perp}$, as the smallest weight of one of the minimal words in the obtained   recovery strucuture of $\mathcal{C}$. Remember that computing the minimum distance of a linear code is a NP complete problem, \cite{NP}, which explains the high complexity of our algorithm. Note, however, that as in the case of the minimum distance, a recovery structure must be calculated only once per code.
\end{remark}

\section{Experimental Results}
\label{experimental}

Algorithm \ref{Algor-2} has been implemented with the program \textsc{Sagemath} \cite{sage}.
In the following tables we summarize the average running times for several examples of codes, obtained with an Intel $\circledR$ CoreTM 2 Duo $2.8$ GHz. The experiments are performed as follows: 
We first generate a full rank random matrix of size $k\times n$ over $\mathbb F_q$  using the command {\tt random\_matrix (GF(q),k,n)}; then we take the corresponding code $\tt C$ and compute its dual $\tt Cd=C.dual\_code()$; if the minimum distances of $\tt C$ and $\tt Cd$ are greater than 1, we apply Algorithm \ref{Algor-2}.

For each base field size $q$, the experiment has been performed on $20$ random codes $\tt C$. The obtained results are shown in Tables \ref{Table-1} and \ref{Table-2}.
In Table \ref{Table-1} all codes have length $n=10$ and dimension $k=4$. The first column contains the base field sized $q$. Second column indicates the average running time for the computation of a sharp recovery structure for $\tt C$, measured in seconds. Third column shows the average number of vectors in $\tt List2$. In Table \ref{Table-2}, we deal with codes of different parameters, which are indicated in the second column. Here we have omitted  the average number of vectors in $\tt List2$.
\begin{table}[h!]
\begin{tabular}{ccc}
\hline
Field size $q$ &  Running time  & Vectors in $\tt List2$ \\ \hline
$q=2$ & $0.009$ seconds & $34$ \\
$q=3$ & $0.022$ seconds & $113$  \\
$q=5$ & $0.322$ seconds & $665$ \\
$q=7$ & $0.727$ seconds & $1010$  \\
$q=11$ & $3.649$ seconds & $4308$\\
$q=13$ & $5.887$ seconds & $5743$\\
$q=17$ & $13.141$ seconds & $10950$\\
$q=19$ & $18.210$ seconds & $14013$\\
$q=23$ & $40.876$ seconds & $22069$ \\
\hline
\end{tabular}
\caption{\small Running times of Algorithm \ref{Algor-2} for random $[10,4]$ codes.}
\label{Table-1}
\end{table}
\begin{table}[h!]
\begin{tabular}{ccc}
\hline
Field size $q$ &  Parameters  & Running time \\ \hline
$q=2$ & $[50, 10]$ &  $0.84$ s. \\
$q=2$ & $[50, 12]$ &  $2.984$ s. \\
$q=2$ & $[50, 15]$ &  $11.103$ s. \\
$q=2$ & $[50, 20]$ &  $215.47$ s. \\
$q=2$ & $[70, 15]$ &  $78.4$ s. \\
$q=3$ & $[50, 10]$ &  $20380.70$ s. \\
$q=5$ & $[25, 7]$ &  $14793.754$ s.\\
\hline
\end{tabular}
\caption{\small Running times of Algorithm \ref{Algor-2} for random codes of different parameters.}
\label{Table-2}
\end{table}

%%%%%%%%%%%%%%%%%%%%%%%%%%%%%%%%%%%%%%%%%%%%%%%%%%%%%%%%

\end{document}